\begin{document}
\title{Power Minimizer Symbol-Level Precoding:\\ A Closed-Form Sub-Optimal Solution}

\author{Alireza~Haqiqatnejad,~\emph{Student Member,~IEEE},~Farbod~Kayhan,~and Bj\"{o}rn~Ottersten,~\emph{Fellow,~IEEE}
	\thanks{The authors are with Interdisciplinary Centre for Security, Reliability and Trust (SnT), University of Luxembourg, L-1855 Luxembourg. Email:~\{alireza.haqiqatnejad,farbod.kayhan,bjorn.ottersten\}@uni.lu.}\thanks{
    The authors are supported by the Luxembourg National Research Fund (FNR) under CORE Junior project: C16/IS/11332341 Enhanced Signal Space opTImization for satellite comMunication Systems (ESSTIMS).}}

\newtheorem{theorem}{Theorem}
\newtheorem{acknowledgement}[theorem]{Acknowledgement}
\newtheorem{axiom}[theorem]{Axiom}
\newtheorem{case}[theorem]{Case}
\newtheorem{claim}[theorem]{Claim}
\newtheorem{conclusion}[theorem]{Conclusion}
\newtheorem{condition}[theorem]{Condition}
\newtheorem{conjecture}[theorem]{Conjecture}
\newtheorem{corollary}[theorem]{Corollary}
\newtheorem{criterion}[theorem]{Criterion}
\newtheorem{definition}[theorem]{Definition}
\newtheorem{example}[theorem]{Example}
\newtheorem{exercise}[theorem]{Exercise}
\newtheorem{lemma}[theorem]{Lemma}
\newtheorem{notation}[theorem]{Notation}
\newtheorem{problem}[theorem]{Problem}
\newtheorem{proposition}[theorem]{Proposition}
\newtheorem{remark}[theorem]{Remark}
\newtheorem{solution}[theorem]{Solution}
\newtheorem{summary}[theorem]{Summary}
\newtheorem{property}{Property}

\newcommand{\defeq} {\overset{\underset{\mathrm{def}}{}}{=}}
\newcommand{\Deee} {\mathrm{\boldsymbol{\delta}}}
\newcommand{\lamb} {\mathrm{\boldsymbol{\lambda}}}
\newcommand{\ups} {\mathrm{\boldsymbol{\upsilon}}}
\newcommand{\Tht} {\mathrm{\mathbf{\Theta}}}
\newcommand{\g} {\mathrm{\boldsymbol{\gamma}}}
\newcommand{\Gammaaa}{\mathrm{\mathbf{\Gamma}}}
\newcommand{\Del}{\mathrm{\mathbf{\Delta}}}
\newcommand{\Sigmaaa}{\mathrm{\mathbf{\Sigma}}}
\newcommand{\Siii}{\mathrm{\mathbf{\Psi}}}
\newcommand{\Mu} {\mathrm{\boldsymbol{\mu}}}
\newcommand{\omg} {\mathrm{\boldsymbol{\omega}}}
\newcommand{\psiii} {\mathrm{\boldsymbol{\psi}}}
\newcommand{\HHH} {\mathbf{H}}
\newcommand{\bbb}{\mathbf{b}}
\newcommand{\uuu}{\mathbf{u}}
\newcommand{\EEE}{\mathbf{E}}
\newcommand{\WWW}{\mathbf{W}}
\newcommand{\ZZZ}{\mathrm{\mathbf{Z}}}
\newcommand{\I}{\mathrm{\mathbf{I}}}
\newcommand{\A}{\mathbf{A}}
\newcommand{\DDD}{\mathbf{D}}
\newcommand{\FFF}{\mathbf{F}}
\newcommand{\QQQ}{\mathbf{Q}}
\newcommand{\qqq}{\mathbf{q}}
\newcommand{\G}{\mathbf{G}}
\newcommand{\T}{\mathrm{T}}
\newcommand{\F}{\mathrm{F}}
\newcommand{\HH}{\mathrm{H}}
\newcommand{\LLL}{\mathrm{\mathbf{L}}}
\newcommand{\R}{\mathrm{\mathbf{R}}}
\newcommand{\YYY}{\mathrm{\mathbf{Y}}}
\newcommand{\XXX}{\mathrm{\mathbf{X}}}
\newcommand{\OOO}{\mathrm{\mathbf{0}}}
\newcommand{\aaa}{\mathrm{\mathbf{a}}}
\newcommand{\h}{\mathrm{\mathbf{h}}}
\newcommand{\ppp}{\mathrm{\mathbf{p}}}
\newcommand{\s}{\mathrm{\mathbf{s}}}
\newcommand{\vvv}{\mathrm{\mathbf{v}}}
\newcommand{\zzz}{\mathrm{\mathbf{z}}}
\newcommand{\conv}{\mathrm{\mathbf{conv}}}
\newcommand{\dom}{\mathrm{\mathbf{dom}}}
\newcommand{\bd}{\mathrm{\mathbf{bd}}}
\newcommand{\interior}{\mathrm{\mathbf{int}}}
\newcommand{\ccc}{\mathrm{\mathbf{c}}}
\newcommand{\x}{\mathrm{\mathbf{x}}}
\newcommand{\yyy}{\mathrm{\mathbf{y}}}
\newcommand{\EXP}{\mathrm{E}}
\newcommand{\PR}{\mathrm{Pr}}
\newcommand{\TR}{\mathrm{Tr}}
\newcommand{\VEC}{\mathrm{vec}}
\newcommand{\MAX}{\mathrm{max}}
\newcommand{\SGN}{\mathrm{sgn}}
\newcommand{\SUP}{\mathrm{sup}}
\newcommand{\INF}{\mathrm{inf}}
\newcommand{\TT}{\mathrm{T}}
\newcommand{\C}{\mathbb{C}}
\newcommand{\RR}{\mathbb{R}}
\newcommand{\D}{\mathcal{D}}
\newcommand{\SPSK}{\mathrm{-S}\mathrm{PSK}}
\newcommand{\SQPSK}{\mathrm{-S}\mathrm{QPSK}}
\newcommand{\SAPSK}{\mathrm{-S}\mathrm{APSK}}
\newcommand{\SkPSK}{\mathrm{-S}^k\mathrm{PSK}}
\newcommand{\SkQPSK}{\mathrm{-S}^k\mathrm{QPSK}}
\newcommand{\SkAPSK}{\mathrm{-S}^k\mathrm{APSK}}
\newcommand{\PSNR}{\mathrm{PSNR}}
\newcommand{\SNR}{\mathrm{SNR}}
\newcommand{\LLR}{\mathrm{LLR}}
\newcommand{\diag}{\mathop{\mathrm{diag}}}

\maketitle

\begin{abstract}
	
In this letter, we study the optimal solution of the multiuser symbol-level precoding (SLP) for minimization of the total transmit power under given signal-to-interference-plus-noise ratio (SINR) constraints. Adopting the distance preserving constructive interference regions (DPCIR), we first derive a simplified reformulation of the problem. Then, we analyze the structure of the optimal solution using the Karush-Kuhn-Tucker (KKT) optimality conditions, thereby we obtain the necessary and sufficient condition under which the power minimizer SLP is equivalent to the conventional zero-forcing beamforming (ZFBF). This further leads us to a closed-form sub-optimal SLP solution (CF-SLP) for the original problem. Simulation results show that CF-SLP provides significant gains over ZFBF, while performing quite close to the optimal SLP in scenarios with rather small number of users. The results further indicate that the CF-SLP method has a reduction of order $10^3$ in computational time compared to the optimal solution. 
\end{abstract}

\begin{IEEEkeywords}
	Constructive interference, downlink multiuser MISO, power minimization in symbol-level precoding.
\end{IEEEkeywords}

\section{Introduction}

Handling multiuser interference (MUI) is the key to increase individual reliable transmission rates of the users in a downlink multiuser channel. Extensive research focusing on multiuser transmit beamforming (precoding) schemes with the aim of suppressing the MUI has been reported in the literature (see e.g. \cite{tb_sol_str} and the references therein). Recently, it has been shown that the MUI may not always be destructive. For example, one may try to exploit the constructive part of the interference, or even converting all the interfering components into constructive interference (CI) \cite{slp_rot}. Subsequently, the conventional viewpoint on multiuser precoding evolved from block-level approaches (based on the channel coherence time) to more meticulous design techniques, such as symbol-dependent precoding which results in significant performance gains. This, however, requires the use of data information (DI) in addition to channel state information (CSI) in order to design the precoder. Such a design concept, which is termed as symbol-level precoding (SLP) \cite{slp_chr, slp_con}, enables the received components associated with the users intended signal to constructively interfere with those of the others.

Despite the performance advantages of SLP schemes, one of the main factors that limits their applicability in practice is the increased complexity at the transmitter side \cite{slp_survey}. Broadly speaking, the SLP module needs to compute the output once per symbol slot, or alternatively, one may design the precoder's output beforehand for every possible combination of users' symbols \cite{slp_con}. In either case, a relatively large number of optimization problems has to be solved for every realization of the time-varying channel. As opposed to conventional schemes, e.g., zero-forcing beamforming (ZFBF), \cite{zfbf_gold}, or minimum mean-square error (MMSE) precoding, \cite{est_kay}, SLP approaches are in general highly computationally demanding. Therefore, reducing the complexity of symbol-level precoders is of great practical importance. This has been addressed in \cite{slp_chr_cf} for the SLP design problem with max-min fairness criterion, where an iterative closed-form method is proposed. In \cite{slp_jev}, the SLP optimization problem minimizing the total transmit power is formulated as a non-negative least squares (NNLS), which can be solved via the existing fast NNLS algorithms. For quadrature amplitude modulation (QAM) schemes, the authors in \cite{slp_per} analyze the structure of the optimal symbol-level precoder with symbol error probability constraints and propose a heuristic low-complexity solution.

Focusing on the distance preserving constructive interference regions (DPCIR) \cite{slp_gen}, in this letter we address the SLP power minimization problem with SINR constraints. By rearranging the original formulation and deriving a simplified version of this problem, we discuss the structure of the optimal solution via the Karush-Kuhn-Tucker (KKT) conditions. This leads us to obtain a closed-form (sub-optimal) expression. We will show that the performance of the closed-form solution is comparable to the optimal SLP for small number of users, but with an extremely reduced complexity. Although the proposed closed-form method is more computationally complex than the ZF scheme, our results show that it provides substantial gains in return. This may indeed stimulate the applicability of the proposed method in realistic scenarios.

The remainder of this letter is organized as follows. In Section \ref{sec:sys}, we describe the downlink multiuser MISO model. We recast the distance preserving CI constraints in Section \ref{sec:ci}. In Section \ref{sec:prob}, we formulate the SLP optimization problem, followed by providing an optimality analysis and proposing a closed-form solution. The simulation results are presented in Section \ref{sec:sim}. Finally, we conclude the paper in Section \ref{sec:con}. 

{\bf{Notations:}} To denote matrices and vectors, we use uppercase and lowercase bold-faced letters respectively, and $[\,\cdot\,]^\T$ denotes the transpose operator. For vectors, $\succeq$ denotes the componentwise inequality. $\mathrm{blkdiag}(\cdot)$ represents a square block matrix having main-diagonal block matrices and zero off-diagonal blocks. The set of non-negative real numbers is represented by $\mathbb{R}_+$.

\section{System Model}\label{sec:sys}

We assume the downlink of a  multiuser unicast channel where a common multiple-antenna transmitter sends independent data streams to $K$ single-antenna users. The transmitter, which is equipped with $N$ antennas, employs a symbol-level precoder for transmission. Accordingly, at a symbol instant, the precoder maps independent data symbols $\{s_k\}_{k=1}^K$ onto $N$ transmit antennas, with $s_k$ denoting the intended symbol for the $k$-th user, and is drawn from a finite equiprobable constellation set. The signal vector to be transmitted is denoted by $\uuu=[u_1,\ldots,u_N]^\T\in\C^{N\times1}$, and is a function of all users' symbols $\{s_k\}_{k=1}^K$. In the presence of frequency-flat fading and additive white Gaussian noise, the received signal at the receiver of the $k$-th user is
\begin{equation}\label{eq:sys}
r_k = \h_k\uuu+z_k, \; k=1,...,K,
\end{equation}
where $\h_k\in\C^{1\times N}$ contains the channel coefficients between the transmit antennas and the single receive antenna of user $k$, and $z_k\sim\mathcal{CN}(0,\sigma_k^2)$ represents the complex Gaussian noise at the $k$-th receiver. The $k$-th user may optimally detect $s_k$ from $r_k$ based on the single-user maximum-likelihood (ML) decision rule. In the following, we adopt the equivalent real-valued notations
$$\tilde{\uuu}=[\Re\{\uuu^\T\},\Im\{\uuu^\T\}]^\T, \; \HHH_k=\begin{bmatrix}
\Re\{\h_k\} \; -\Im\{\h_k\}\\
\Im\{\h_k\} \quad\;\: \Re\{\h_k\}
\end{bmatrix}\!,\vspace{-.1cm}$$
where $\tilde{\uuu}\in\mathbb{R}^{2N\times1}$ and $\HHH_k\in\mathbb{R}^{2\times2N}, k=1,...,K$. Henceforth, the user's noise-free received signal is represented by $\HHH_k \tilde{\uuu}= [\Re\{\h_k\uuu\},\Im\{\h_k\uuu\}]^\T$ for all $k=1,...,K$.

\section{Constructive Interference Constraints}\label{sec:ci}

The DI exploitation in symbol-level design is realized by processing the transmit signal to be received in a specific constructive interference region (CIR) associated to each user. CIRs are generally defined as the regions that preserve or enhance the symbol detection accuracy with respect to (w.r.t.) the original constellation set (see e.g. \cite{slp_chr}). In \cite{slp_gen}, DPCIRs are introduced as a general family of CIRs that do not increase the symbol error rates of the users. The halfspace representation of DPCIRs is provided for generic modulation schemes based on the ML decision regions of the constellation set.

For a constellation point $\x_i$, the corresponding DPCIR is the intersection of a finite number of halfspaces, each associated with one of $M_i$ neighboring points of $\x_i$ (the neighboring points are referred to those points sharing a decision boundary with $\x_i$).  More specifically, from the representation provided in \cite{slp_gen}, it is straightforward to show that any point $\x$ in the DPCIR of $\x_i$ satisfies a linear matrix inequality (LMI) as
\begin{equation}\label{eq:lmi2}
\A_{i} \left(\x - \x_{i}\right) \succeq \OOO,
\end{equation}
where $\A_{i}\in\RR^{M_i\times 2}$ is a matrix that contains the normal vectors of DPCIR boundaries (hyperplanes), given by
\begin{equation}\label{eq:A}
\A_{i} = \begin{bmatrix} \aaa_{i,1}^\T \\ \vdots \\ \aaa_{i,M_{i}}^\T \end{bmatrix} = \begin{bmatrix} (\x_{i} - \x_{i,1})^\T \\ \vdots \\ (\x_{i} - \x_{i,M_{i}})^\T \end{bmatrix},
\end{equation}
with $\x_{i,1},...,\x_{i,M_{i}}$ denoting the neighboring constellation points of $\x_{i}$. By introducing a non-negative vector $\Deee\in\RR_+^{M_i\times 1}$, \eqref{eq:lmi2} is equivalent to
\begin{equation}\label{eq:lme2}
\A_{i} \left(\x - \x_{i}\right) = \Deee_i, \; \Deee_i\succeq\OOO,
\end{equation}
which will be used as the CI constraint in our formulation of the SLP optimization problem. It is also shown that if $\x_{i}$ is a constellation point with bounded decision region, we always have $\Deee_i=\OOO$ \cite{slp_gen}. On the other hand, if $\x_{i}$ refers to a constellation point with unbounded decision region, it is shown in \cite{slp_tsp} that the associated DPCIR is a polyhedral angle with two infinite edges starting from $\x_{i}$. Hence, there always exist exactly two halfspaces that are sufficient to characterize the DPCIR, and to construct $\A_{i}\in\mathbb{R}^{2\times2}$ as a non-singular matrix (see \cite{slp_tsp} for more details). Further, any point $\x$ that satisfies \eqref{eq:lme2} can be uniquely specified by $\Deee_i\in\mathbb{R}_+^{2\times1}$.

\section{SINR-Constrained power minimizer SLP}\label{sec:prob}

In this section, we are interested in the symbol-level power minimization problem being constrained by CIRs as well as individual SINR requirements, i.e.,
\begin{equation}\label{eq:raw}
\begin{aligned}
\underset{\uuu}{\mathrm{minimize}} & \quad \uuu^\T\uuu \\
\mathrm{s.t.} &\quad \h_k\uuu\in \sigma_k\sqrt{\gamma_k}\;\D_k,\;k = 1,...,K,
\end{aligned}
\end{equation}
where $\gamma_k$ and $\D_k$ respectively denote the required SINR and the CIR related to the $k$-th user. By assuming DPCIRs, the CI expressions in the form of \eqref{eq:lme2} can be used to explicitly define the CI constraints of \eqref{eq:raw}.

In order to simplify the analysis and notation, and without loss of generality, we assume an identical $M$-ary constellation set $\mathcal{X}=\{\x_i|\x_i\in\mathbb{R}^{2\times1}\}_{i=1}^M$ with unit average power for all $K$ users. A symbol $s_k$ then corresponds to one of the points $\{\x_i\}_{i=1}^M$ in $\mathcal{X}$. We denote by $i_k$ the index of the constellation point corresponding to $s_k$, i.e.,\vspace{-.1cm}
$$\x_{i_k}=[\Re\{s_k\},\Im\{s_k\}]^\T, \; i_k\in\{1,...,M\}, \; k=1,...,K.\vspace{-.1cm}$$
In the rest, we consider the case in which all $\x_{i_k}$ have unbounded decision regions, i.e., $\Deee_{i_k}\in\mathbb{R}_+^{2\times1},k=1,...,K$. It is straightforward to generalize the results to the case with the users' symbols comprising a mix of points with both bounded and unbounded decision regions.

For each user $k$, $\HHH_k \tilde{\uuu}$ is pushed by the precoder to be inside the corresponding DPCIR up to a scale factor that depends on the given SINR requirement. 
From \eqref{eq:lme2}, by substituting $\HHH_k \tilde{\uuu}$ for $\x$ and replacing the scaled constellation point $\sigma_k\sqrt{\gamma_k}\,\x_{i_k}$, we obtain the CI constraint for the $k$-th user as
\begin{equation}\label{eq:lme}
\A_{i_k} (\HHH_k \tilde{\uuu} - \sigma_k\sqrt{\gamma_k}\,\x_{i_k}) = \Deee_{i_k}, \; \Deee_{i_k}\succeq\OOO.
\end{equation}
Taking all the users into account, by stacking the CI constraint \eqref{eq:lme} for all $k\in\{1,...,K\}$ into the matrix form, we have
\begin{equation}\label{eq:lmeall}
\A (\tilde\HHH \tilde{\uuu} - \Sigmaaa\Gammaaa^{1/2}\,\tilde{\x}) = \Deee, \; \Deee\succeq\OOO,
\end{equation}
where $\tilde{\HHH} \!\triangleq\! [\HHH_1^\T,...,\HHH_K^\T]^\T, \A \!\triangleq\! \mathrm{blkdiag}(\A_{i_1},...,\A_{i_K})$, $\Sigmaaa \triangleq \mathrm{blkdiag}(\sigma_1,...,\sigma_K)$, $\Gammaaa \triangleq \mathrm{blkdiag}(\gamma_1,...,\gamma_K)$, $\tilde{\x}\triangleq[\x_{i_1},...,\x_{i_K}]^\T$, and $\Deee\triangleq[\Deee_{i_1},...,\Deee_{i_K}]^\T$. Recalling that all $\A_{i_k},k=1,...,K$ are non-singular, it can be verified that $\A$ is invertible. Therefore, \eqref{eq:lmeall} can be written as
\begin{equation}\label{eq:lmeall2}
\tilde{\HHH} \tilde{\uuu} = \Sigmaaa\Gammaaa^{1/2}\,\tilde{\x} + \A^{-1}\Deee, \; \Deee\succeq\OOO.
\end{equation}
Using \eqref{eq:lmeall2}, a (convex) compact formulation for the optimization problem of interest can be expressed as
\begin{equation}\label{eq:pm3}
\begin{aligned}
\underset{\tilde{\uuu},\Deee\succeq\OOO}{\mathrm{minimize}} & \quad \tilde{\uuu}^\T\tilde{\uuu}\\
\mathrm{s.t.} & \quad \tilde{\HHH} \tilde{\uuu} = \Sigmaaa\Gammaaa^{1/2} \; \tilde{\x} + \A^{-1}\Deee.
\end{aligned}
\end{equation}
We further notice that $\tilde{\HHH}$ is (almost surely) a full row rank matrix due to its stochastic nature. This results in a bijection between $\tilde{\uuu}$ and $\Deee$ in \eqref{eq:lmeall2}, i.e., for any given $\Deee$, the (least-norm) $\tilde{\uuu}$ is obtained by
\begin{equation}\label{eq:u}
\mathrm{arg}\underset{\tilde{\uuu}}{\mathrm{min}}\;\tilde{\uuu}^\T\tilde{\uuu} \triangleq \tilde{\uuu}_\text{\tiny LN}(\Deee) = \underbrace{\tilde{\HHH}^\dagger\Sigmaaa\Gammaaa^{1/2} \; \tilde{\x}}_{\tilde{\uuu}_\text{\tiny ZF}} + \underbrace{\tilde{\HHH}^\dagger\A^{-1}\Deee}_{\tilde{\uuu}_\text{\tiny SL}},
\vspace{-.15cm}
\end{equation}
where $\tilde{\HHH}^\dagger = \tilde{\HHH}^\T(\tilde{\HHH}\tilde{\HHH}^\T)^{-1}$ is the Moore-Penrose inverse of $\tilde{\HHH}$. Equation \eqref{eq:u} gives the structure of the power minimizer transmit vector, i.e., the optimal solution of \eqref{eq:pm3}. Intuitively, it consists of two parts: $\tilde{\uuu}_\text{\tiny ZF}$, which is the solution of constructive interference zero-forcing (CIZF) \cite{slp_con}, and $\tilde{\uuu}_\text{\tiny SL}$, the DPCIR-dependent part. It is worth noting that under equal SINR thresholds, the CIZF precoder refers to the conventional ZFBF.
Accordingly, the optimization problem \eqref{eq:pm3} reduces to
\vspace{-.1cm}
\begin{equation}\label{eq:pm4}
\Deee^*=\mathrm{arg}\underset{\Deee\succeq\OOO}{\mathrm{min}} \quad \|\tilde{\HHH}^\dagger\Sigmaaa\Gammaaa^{1/2} \; \tilde{\x} + \tilde{\HHH}^\dagger \A^{-1}\Deee\|^2.
\end{equation}
The optimal transmit vector is then obtained by
\begin{equation}\label{eq:uopt}
\tilde{\uuu}^* = \tilde{\uuu}_\text{\tiny LN}(\Deee^*).
\end{equation}
Problem \eqref{eq:pm4} is classified as a quadratic programming (QP) \cite{convex_boyd}.
There are various algorithms to solve a QP optimization, e.g., interior-point or active set methods \cite{qp_methods}; however, they typically require a rather large number of iterations to converge. This motivates us to find a more tractable, possibly sub-optimal, solution for \eqref{eq:pm4}.

{\emph{Optimality Analysis}}: The Lagrangian of \eqref{eq:pm4} is given by
\begin{align}\label{eq:L}
\mathcal{L}(\Deee,\lamb) &\!=\! \tilde{\x}^\T\Sigmaaa\Gammaaa^{1/2}\tilde{\HHH}^{\dagger\T}\tilde{\HHH}\Sigmaaa\Gammaaa^{1/2}\tilde{\x} \nonumber\\
&\!+\! \Deee^\T\A^{-\T}\tilde{\HHH}^{\dagger\T}\tilde{\HHH}^\dagger\Sigmaaa\Gammaaa^{1/2}\tilde{\x} + \tilde{\x}^\T\Sigmaaa\Gammaaa^{1/2}\tilde{\HHH}^{\dagger\T}\tilde{\HHH}^\dagger\A^{-1}\Deee \nonumber\\
&\!+\! \Deee^\T\A^{-\T}\tilde{\HHH}^{\dagger\T}\tilde{\HHH}^\dagger\A^{-1}\Deee + \lamb^\T\Deee,
\vspace{-.1cm}
\end{align}
from which the Lagrange dual problem can be written as
\begin{equation}\label{eq:pmd}
\underset{\lamb\preceq\OOO}{\mathrm{max}} \quad \underset{\Deee\succeq\OOO}{\INF} \; \mathcal{L}(\Deee,\lamb),
\end{equation}
where $\lamb$ is the vector of the Lagrange multipliers. Denoting the primal and dual optimals by $\Deee^*$ and $\lamb^*$, respectively, the Karush-Kuhn-Tucker (KKT) optimality conditions are
\begin{subequations}\label{eq:kkt}
\begin{align}
\nabla_\Deee\,\mathcal{L}(\Deee^*,\lamb^*) &= \OOO, \quad (\mathrm{stationarity}) \label{eq:kkt1}\\
\Deee^*&\succeq\OOO, \quad (\mathrm{primal \; feasibility}) \label{eq:kkt2}\\
\lamb^*&\preceq\OOO, \quad (\mathrm{dual \; feasibility}) \label{eq:kkt3}\\
\lamb^{*T}\Deee^*&=0, \quad (\mathrm{complementary \; slackness}) \label{eq:kkt4}
\end{align}
\end{subequations}
Notice that since the primal problem \eqref{eq:pm4} is convex, strong duality holds and the KKT conditions \eqref{eq:kkt1}-\eqref{eq:kkt4} are necessary and sufficient \cite{convex_boyd}. Consequently, any candidate solution satisfying all the KKT conditions is globally optimal.

Letting $\QQQ = \QQQ^{\T}\triangleq\A^{-\T}\tilde{\HHH}^{\dagger\T}\tilde{\HHH}^\dagger\A^{-1}=[\qqq_1,...,\qqq_{2K}]^\T$ and $\vvv\triangleq\A^{-\T}\tilde{\HHH}^{\dagger\T}\tilde{\HHH}^\dagger\Sigmaaa\Gammaaa^{1/2}\tilde{\x}=[v_1,...,v_{2K}]^\T$, the stationarity condition \eqref{eq:kkt1} is derived as $2\QQQ\Deee^* + 2\vvv + \lamb^* = \OOO$ and thus,
\begin{equation}\label{eq:sta2}
\lamb^* = -2(\QQQ\Deee^* + \vvv).
\end{equation}
It then follows from \eqref{eq:kkt3} and \eqref{eq:sta2} that
\begin{equation}\label{eq:sta3}
\QQQ\Deee^* + \vvv \succeq \OOO.
\end{equation}
Furthermore, substituting $\lamb^*$ from \eqref{eq:sta2} in \eqref{eq:kkt4} yields
\begin{equation}\label{eq:sta4}
(\QQQ\Deee^* + \vvv)^\T \Deee^* = 0,
\end{equation}
from which by denoting $\psiii\!\triangleq\!\QQQ\Deee^* + \vvv\!=\![\psi_1,...,\psi_{2K}]^\T$ and $\Deee^* = [\delta^*_1,...,\delta^*_{2K}]^\T$, it follows that
\begin{equation}\label{eq:sta5}
\sum_{l=1}^{2K}\psi_l \, \delta^*_l = 0.
\end{equation}
Considering \eqref{eq:kkt2} and \eqref{eq:sta3}, we have $\psi_l\geq0,l=1,...,2K$. As a consequence, the optimality condition \eqref{eq:sta5} is met iff
\begin{equation}\label{eq:sta6}
\psi_l \, \delta^*_l=0, \; \forall l\in\{1,...,2K\}.
\end{equation}
In other words, $\psi_l$ and $\delta^*_l$ cannot be both non-zero for any specific $l\in\{1,...,2K\}$. Based on this observation, the following lemma relates the SLP solution to that of ZFBF.
\begin{lemma}\label{lem:1}
	The optimal solution of the SLP power minimization \eqref{eq:pm3} is identical to the solution of ZFBF iff $\vvv\succeq\OOO$.
\end{lemma}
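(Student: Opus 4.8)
The plan is to reduce the claim to a statement about the reduced QP \eqref{eq:pm4}. By \eqref{eq:u}, the optimal transmit vector is $\tilde{\uuu}^* = \tilde{\uuu}_\text{\tiny ZF} + \tilde{\HHH}^\dagger\A^{-1}\Deee^*$, and the linear map $\Deee\mapsto\tilde{\HHH}^\dagger\A^{-1}\Deee$ is injective, since $\tilde{\HHH}^\dagger\A^{-1}$ is the product of the full-column-rank matrix $\tilde{\HHH}^\dagger$ (using the almost-sure full row rank of $\tilde{\HHH}$) and the invertible matrix $\A^{-1}$. Hence $\tilde{\uuu}^*$ equals the ZFBF vector $\tilde{\uuu}_\text{\tiny ZF}$ if and only if $\Deee^*=\OOO$. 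Moreover, under the same rank assumption $\QQQ\succ\OOO$, so the objective of \eqref{eq:pm4} is strictly convex, $\Deee^*$ is unique, and the statement is well posed. It therefore suffices to show $\Deee^*=\OOO \iff \vvv\succeq\OOO$.

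For the ``if'' part, I would assume $\vvv\succeq\OOO$ and verify that the candidate pair $(\Deee,\lamb)=(\OOO,-2\vvv)$ satisfies the KKT system \eqref{eq:kkt1}--\eqref{eq:kkt4}: stationarity follows from \eqref{eq:sta2} because $\QQQ\,\OOO+\vvv=\vvv$; primal feasibility $\OOO\succeq\OOO$ is immediate; dual feasibility $-2\vvv\preceq\OOO$ is exactly the hypothesis; and complementary slackness holds trivially since $\Deee=\OOO$. As \eqref{eq:pm4} is convex, the KKT conditions are sufficient for global optimality, so $\Deee^*=\OOO$ and consequently $\tilde{\uuu}^*=\tilde{\uuu}_\text{\tiny ZF}$.

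For the ``only if'' part, I would use the necessary condition \eqref{eq:sta3}, valid at any optimum of the convex problem \eqref{eq:pm4}: substituting $\Deee^*=\OOO$ into $\QQQ\Deee^*+\vvv\succeq\OOO$ yields $\vvv\succeq\OOO$ at once. I do not anticipate a genuine obstacle; the only step needing care is the equivalence between ``$\tilde{\uuu}^*$ coincides with ZFBF'' and ``$\Deee^*=\OOO$'', which hinges on the injectivity of $\tilde{\HHH}^\dagger\A^{-1}$, i.e., on the full-rank property of $\tilde{\HHH}$ already invoked in deriving \eqref{eq:u}. The remainder is a direct verification of the KKT equations.
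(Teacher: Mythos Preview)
Your proof is correct and rests on the same KKT framework as the paper; the necessity direction is identical. The sufficiency direction differs slightly in technique: you verify directly that the candidate pair $(\Deee,\lamb)=(\OOO,-2\vvv)$ satisfies \eqref{eq:kkt1}--\eqref{eq:kkt4} and invoke sufficiency of KKT for convex problems, whereas the paper argues by contradiction, assuming $\Deee^*\neq\OOO$ and using the positive definiteness of $\QQQ$ together with \eqref{eq:sta4} to force $\Deee^{*\T}\QQQ\Deee^*+\vvv^\T\Deee^*>0$. Your route is arguably more direct and avoids the contradiction step; the paper's route makes the role of $\QQQ\succ\OOO$ in excluding nonzero optima explicit. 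You also spell out the injectivity of $\tilde{\HHH}^\dagger\A^{-1}$ to justify the equivalence $\tilde{\uuu}^*=\tilde{\uuu}_\text{\tiny ZF}\Leftrightarrow\Deee^*=\OOO$, which the paper simply asserts as ``clear''.
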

\begin{proof}
	{\emph{Sufficiency}}: It is clear from \eqref{eq:u} and \eqref{eq:uopt} that $\tilde{\uuu}^*$ equals the solution of ZFBF iff $\Deee^* = \OOO$. Given $\vvv\succeq\OOO$, let assume by contradiction that $\Deee^* \neq \OOO$, i.e., $\exists \, l$ such that $\delta_l^* > 0$, which gives $\vvv^\T \Deee^* \geq 0$. Let us rewrite the optimality condition \eqref{eq:sta4} as
	$\Deee^{*\T}\QQQ\Deee^* + \vvv^\T \Deee^* = 0$. By definition, $\QQQ$ is symmetric and $\QQQ = (\tilde{\HHH}^\dagger\A^{-1})^\T\tilde{\HHH}^\dagger\A^{-1}$, where $\tilde{\HHH}^\dagger\A^{-1}$ has full column rank due to the random concatenated channel $\tilde{\HHH}$. Hence, $\QQQ$ is a positive definite matrix \cite[Theorem 7.2.7]{horn_matrix}, i.e., $\Deee^{*\T}\QQQ\Deee^* > 0$ for any $\Deee^* \neq \OOO$. This, however, yields $\Deee^{*\T}\QQQ\Deee^* + \vvv^\T \Deee^* > 0$ which contradicts the KKT condition \eqref{eq:sta4}. Therefore, having $\vvv\succeq\OOO$, it necessarily holds that $\Deee^* = \OOO$, as required.
	
	{\emph{Necessity}}: Assuming $\Deee^* = \OOO$, it immediately follows from \eqref{eq:sta3} that $\vvv\succeq\OOO$. This completes the proof.
\end{proof}
Lemma \ref{lem:1} provides the necessary and sufficient condition under which the DPCIR-based SLP has the same solution as ZFBF. This occurs depending on the instantaneous realization of the users' symbols as $\vvv\succeq\OOO$ is equivalently met by $\A^{-\T}\tilde{\HHH}^{\dagger\T}\tilde{\HHH}^\dagger\Sigmaaa\Gammaaa^{1/2}\tilde{\x}\succeq\OOO$. It can be further inferred from \eqref{eq:sta6} and Lemma \ref{lem:1} that as the number of non-zero (i.e. positive) elements of $\psiii$ decreases, the SLP solution may diverge from that of the ZFBF. In the extreme case with $\psi_l=0,l=1,...,2K$, there exists at least one $\delta^*_l \neq 0$. It can be verified as $\psiii = \QQQ\Deee^* + \vvv = \OOO$ has a unique solution equal to $\Deee^* = -\QQQ^{-1}\vvv = \A\Sigmaaa\Gammaaa^{1/2}\tilde{\x}$. Since $\A$ is full rank, it has an empty null space, thus $\A\Sigmaaa\Gammaaa^{1/2}\tilde{\x} \neq \OOO$. This means that $\Deee^* \neq \OOO$ and it has at least one non-zero entry. In such case, the SLP design results in higher precoding gains compared to the ZFBF. This case, however, is feasible only if the unique solution to the system of linear equations $\QQQ\Deee^* + \vvv = \OOO$ is non-negative, i.e., $-\QQQ^{-1}\vvv\succeq\OOO$, or equivalently $\A\Sigmaaa\Gammaaa^{1/2}\tilde{\x}\preceq\OOO$.

{\emph{Closed-Form Sub-Optimal Solution}}: By utilizing the KKT optimality analysis, a low-complexity solution can be derived with a simple idea behind. Let $\mathcal{Z}\triangleq\{l|\delta^*_l\neq0\}$ be the set of inactive constraints at the optimum, then according to \eqref{eq:sta6}, we have
\begin{equation}\label{eq:z}
\psi_l = \qqq_l^\T\Deee^*+v_l = 0, \; \forall \, l\in\mathcal{Z}.
\vspace{-.1cm}
\end{equation}
which gives a reduced system of linear equations to obtain $\Deee^*$. However, in theory, we do not have such prior information, i.e., the inactive set $\mathcal{Z}$ is not explicitly known. Instead, we can use a presumption on $\mathcal{Z}$ as follows.

From \eqref{eq:sta4}, the positive definiteness of $\QQQ$ verifies that
\begin{equation}\label{eq:app}
\vvv^\T \Deee^* =\sum_{l=1}^{2K}v_l \, \delta^*_l \leq 0,
\vspace{-.1cm}
\end{equation}
where equality holds only for $\Deee^*=\OOO$. A presumption of the inactive set $\mathcal{Z}$ can be derived based on the sign of the elements in $\vvv$, i.e., $\hat{\mathcal{Z}}=\{l|v_l<0\}$ with $|\hat{\mathcal{Z}}|=L$. Here, it is assumed that $\delta_l^*=0$ (i.e., the $l$-th constraint is active) for those $l$ with $v_l\geq0$. This results in a reduced system of linear equations as
\begin{equation}\label{eq:red}
\QQQ'\Deee^{*\prime} + \vvv' = \OOO,
\end{equation}
where $\QQQ'\in\mathbb{R}^{L\times L}$, $\Deee^{*\prime}\in\mathbb{R}^{L\times 1}$ and $\vvv'\in\mathbb{R}^{L\times 1}$ are punctured versions of $\QQQ$, $\Deee^*$ and $\vvv$ obtained by excluding the equations, variables and coefficients corresponding to $l\notin\hat{\mathcal{Z}}$. This new system  has $L$ linear equations and $L$ variables, where $L \leq 2K$, hence possibly less dimensionality than the original problem. By noticing that the non-singularity of $\QQQ$ is preserved under puncturing, the (unique) solution of \eqref{eq:red} is readily given by the following closed-form expression
\begin{equation}\label{eq:cl}
\Deee^{*\prime} = \MAX\{-\QQQ^{\prime -1} \vvv',\OOO\},
\end{equation}
where $\MAX\{\cdot\}$ denotes the elementwise maximum, and is applied in order to ensure the satisfaction of the primal feasibility condition \eqref{eq:kkt2}. The entire $\Deee^*$ can be obtained by inserting the zero entries  $\delta^*_l,l\notin\hat{\mathcal{Z}}$ into $\Deee^{*\prime}$. The resulting $\Deee^*$ is then a sub-optimal solution for \eqref{eq:pm4}.

As for generalization of the proposed method, it should be noted that for generic constellations possibly having points with bounded decision regions, by introducing a diagonal binary weighting matrix $\WWW$ for $\Deee$ in the optimization problem \eqref{eq:pm3}, similar analyses hold.

\section{Simulation Results}\label{sec:sim}

In this section, we evaluate the performance of the proposed closed-form solution (CF-SLP). The results are compared with the optimal power minimizer SLP (OPT-SLP) and the CIZF scheme. We assume equal SINR thresholds for all the users, i.e., $\gamma_k=\gamma,k=1,...,K$, hence CIZF and ZFBF schemes are equivalent. We further assume scenarios with $N=K$, and it is referred to as system dimension. The complex channel vectors are randomly generated as $\h_k\sim\mathcal{CN}(\OOO,\I),k=1,...,K$. All the simulations are done using MATLAB software and CVX convex optimization package \cite{cvx} (SDPT3 solver).

In Fig. \ref{fig:power} (a)-(c), we plot the total transmit power as a function of SINR threshold for three system dimensions, employing the QPSK modulation scheme. As it can be observed, for $2\times2$ and $4\times4$ scenarios, CF-SLP is almost equally as power-efficient as OPT-SLP. The loss due to the sub-optimality of CF-SLP is around $0.1$ dBW and $0.4$ dBW in the depicted range of SINR for $2\times2$ and $4\times4$ system dimensions, respectively. This loss is shown in Fig. \ref{fig:power} (c) to be larger for the $8\times8$ scenario (with a maximum loss of $4.4$ dBW), which can be explained by the possibility of having more errors in $\hat{\mathcal{Z}}$ (w.r.t. $\mathcal{Z}$) as the dimensionality of problem increases. Nevertheless, CF-SLP still offers performance gains up to $3.2$ dBW compared to CIZF (ZFBF).

\begin{figure}
	\begin{subfigmatrix}{4}
		\subfigure[]{\includegraphics[width=.49\columnwidth]{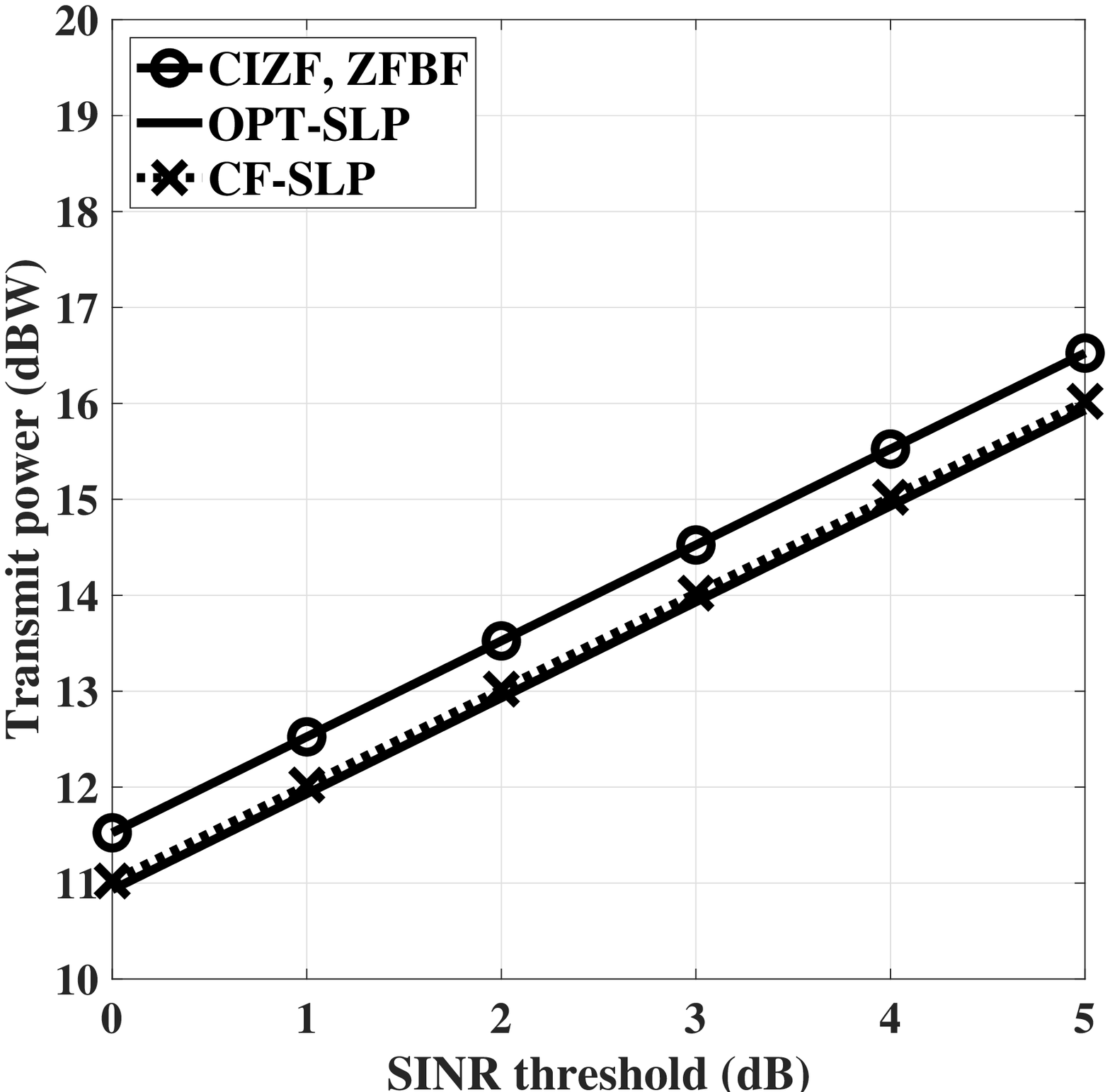}}
		\subfigure[]{\includegraphics[width=.49\columnwidth]{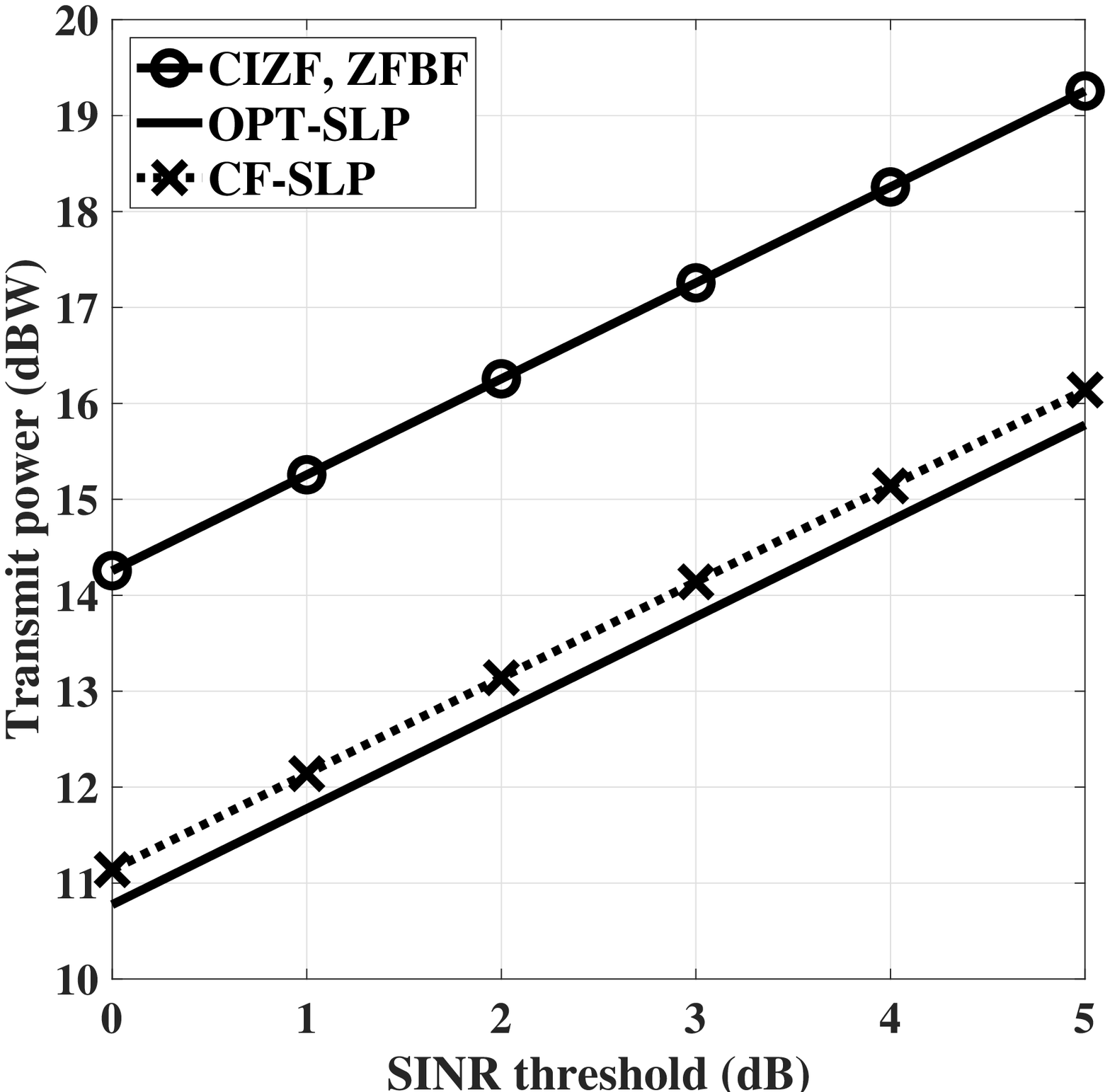}}
		\subfigure[]{\includegraphics[width=.49\columnwidth]{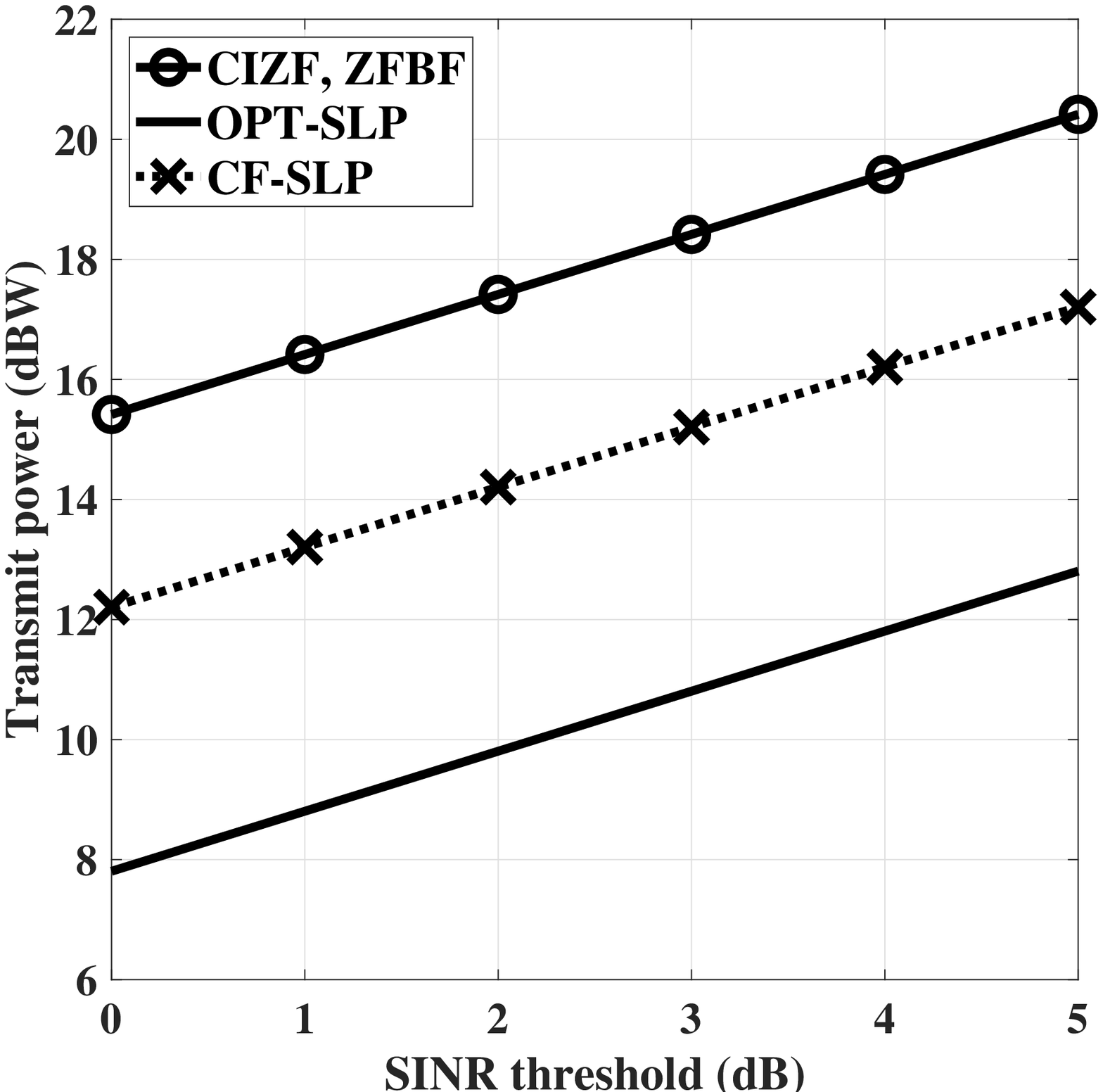}}
		\subfigure[]{\includegraphics[width=.49\columnwidth]{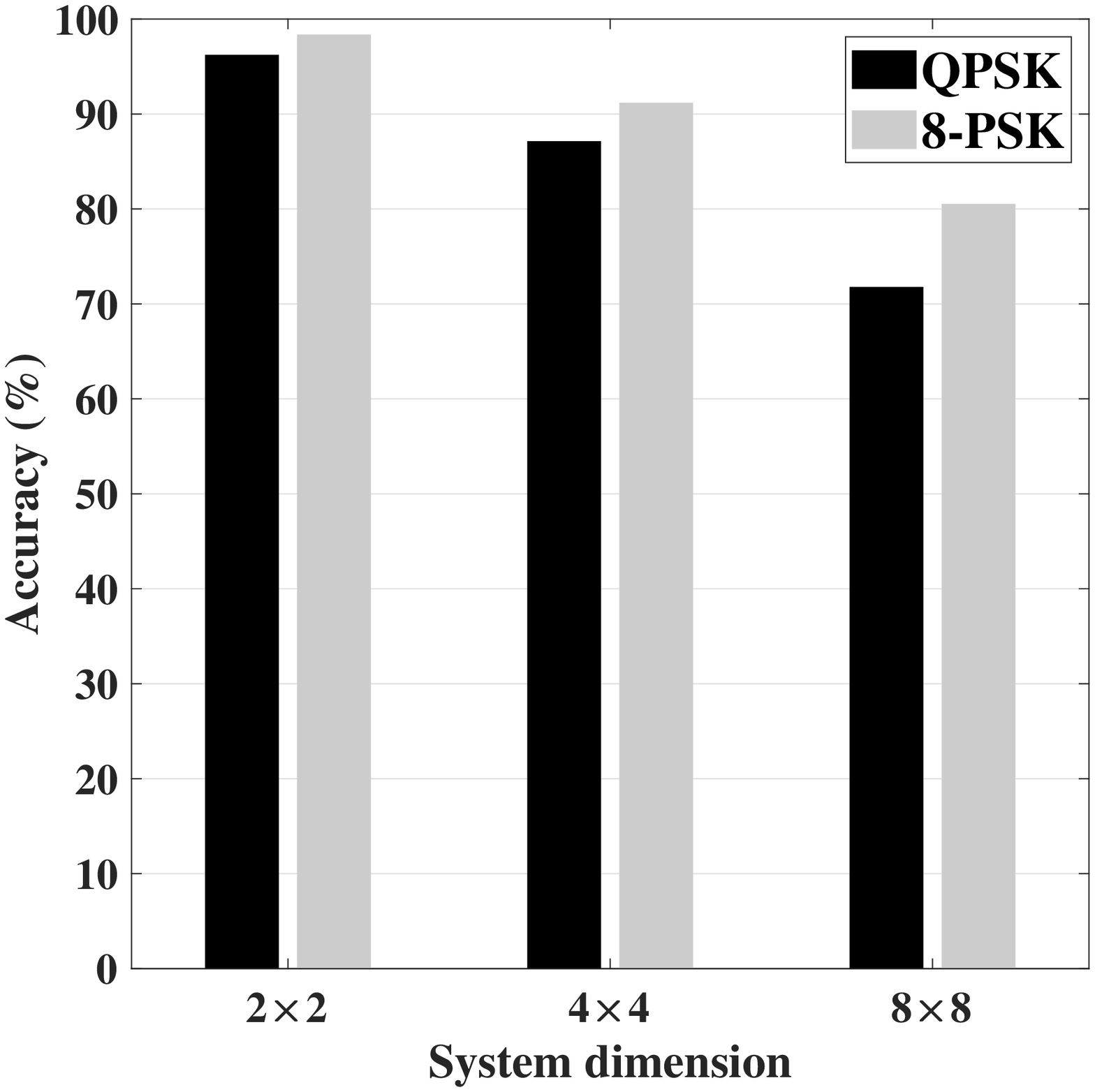}}
	\end{subfigmatrix}
	\caption{Total transmit power vs. SINR threshold (a) $2\times2$ (b) $4\times4$ (c) $8\times8$. (d) Accuracy of CF-SLP for different system dimensions with $\gamma = 3$ dB.}
	\label{fig:power}
\end{figure}

The accuracy percentage of CF-SLP is plotted in Fig. \ref{fig:power} (d), where the measure of accuracy is the number of correct entries in $\hat{\mathcal{Z}}$ as compared to the optimum. It can be seen that for $2\times2$ scenario, CF-SLP performs with an accuracy of at least $95\%$. This accuracy drops down to $70-80\%$ for an $8\times8$ system, where the results show greater accuracies for higher orders of PSK constellation. 
However, notice that CF-SLP, as obtained in \eqref{eq:cl}, guarantees the DPCIR constraints to be met; therefore it does not degrade the symbol error rate performance.

In Table \ref{tab:time}, we compare the complexity of the three precoding schemes in terms of the average execution time per symbol slot (the time values are obtained via the relevant functions of MATLAB and CVX). As for the ZFBF scheme, the precoding matrix is multiplied by the users' symbol vector every symbol time (the precoding matrix computation, which is typically done depending on the CSI validity interval, is not included in the execution times). The CF-SLP method consists of computing and puncturing $\QQQ$ and $\vvv$, and then solving \eqref{eq:cl}. On the other hand, solving the convex power optimization \eqref{eq:pm4} accounts for the execution time of OPT-SLP. The numerical results show that CF-SLP can potentially reduce the complexity of the precoding module by orders of $10^3$, compared to OPT-SLP. In comparison with ZFBF, the execution times of CF-SLP are expectedly greater (by orders of $10$) due to the symbol-level processing. This indicates a performance-complexity tradeoff between the two schemes, especially for large system dimensions.

\begin{table}[h]
	\caption{Execution time of the precoding schemes.}
	\label{tab:time}
	\centering
	\renewcommand{\arraystretch}{0.7}
	\begin{tabular}{ccccc}
		\toprule
		Modulation & Dimension & \multicolumn{3}{c}{Execution time (ms/symbol)}\\
		\cmidrule{3-5}
		& & OPT-SLP & CF-SLP & ZFBF\\
		\midrule
		QPSK & $2\times2$ & 642.0 & 0.086 & 0.004\\
		& $4\times4$ & 678.5 & 0.104 & 0.005\\
		& $8\times8$ & 692.2 & 0.258 & 0.007\\
		\midrule
		8-PSK & $2\times2$ & 751.7 & 0.090 & 0.005\\
		& $4\times4$ & 713.6 & 0.101 & 0.007\\
		& $8\times8$ & 715.8 & 0.271 & 0.009\\
		\bottomrule
	\end{tabular}
\end{table}

\section{Conclusion}\label{sec:con}

In this letter, we propose a closed-form sub-optimal solution method for the power minimizer SLP with SINR constraints. This is done by first simplifying the original formulation, and then discussing the optimality of the reduced problem via the KKT conditions. Through the optimality analysis, we obtain the necessary and sufficient condition that the power minimizer SLP has the same solution as ZFBF. The analysis further helps us to derive a closed-form sub-optimal SLP solution (CF-SLP). In particular, for scenarios with rather small number of users, the results show that CF-SLP offers quite similar performance compared to the computationally demanding optimal SLP, and therefore it can be an appropriate alternative for the conventional ZFBF scheme.



\end{document}